\documentclass[sigconf]{acmart}
\AtBeginDocument{%
  }

\usepackage{multirow}
\usepackage{amsthm}

\theoremstyle{remark}

\copyrightyear{2025}
\acmYear{2025}
\setcopyright{acmlicensed}\acmConference[MM '25]{Proceedings of the 33rd ACM International Conference on Multimedia}{October 27--31, 2025}{Dublin, Ireland}
\acmBooktitle{Proceedings of the 33rd ACM International Conference on Multimedia (MM '25), October 27--31, 2025, Dublin, Ireland}
\acmDOI{10.1145/3746027.3755540}
\acmISBN{979-8-4007-2035-2/2025/10}

\begin{document}

\title{FITMM: Adaptive Frequency-Aware Multimodal Recommendation via Information-Theoretic Representation Learning}


\author{Wei Yang}
\authornote{These authors contributed equally to this work.}
\affiliation{%
  \institution{Kuaishou Technology}
  \city{Beijing}
  \country{China}
}
\email{yangwei08@kuaishou.com}
\author{Rui Zhong}
\authornotemark[1]
\affiliation{
  \institution{Kuaishou Technology}
  \city{Beijing}
  \country{China}
}
\email{zhongrui@kuaishou.com}

\author{Yiqun Chen}
\affiliation{
  \institution{Renmin University of China}
  \city{Beijing}
  \country{China}
}
\email{chenyiqun990321@ruc.edu.cn}

\author{Shixuan Li}
\affiliation{
  \institution{University of Southern California}
  \city{Los Angeles, CA}
  \country{USA}
}
\email{sli97750@usc.edu}

\author{Heng Ping}
\affiliation{
  \institution{University of Southern California}
  \city{Los Angeles, CA}
  \country{USA}
}
\email{hping@usc.edu}

\author{Chi Lu}
\affiliation{
  \institution{Kuaishou Technology}
  \city{Beijing}
  \country{China}
}
\email{luchi@kuaishou.com}

\author{Peng Jiang}
\affiliation{
  \institution{Kuaishou Technology}
  \city{Beijing}
  \country{China}
}
\email{jiangpeng@kuaishou.com}

\renewcommand{\shortauthors}{Wei Yang et al.}


\begin{abstract}
Multimodal recommendation aims to enhance user preference modeling by leveraging rich item content such as images and text. Yet dominant systems fuse modalities in the spatial domain, obscuring the frequency structure of signals and amplifying misalignment and redundancy. We adopt a spectral information-theoretic view and show that, under an orthogonal transform that approximately block-diagonalizes bandwise covariances, the Gaussian Information Bottleneck objective decouples across frequency bands, providing a principled basis for separate-then-fuse paradigm. Building on this foundation, we propose \textbf{FITMM}, a Frequency-aware Information-Theoretic framework for multimodal recommendation. FITMM constructs graph-enhanced item representations, performs modality-wise spectral decomposition to obtain orthogonal bands, and forms lightweight within-band multimodal components. A residual, task-adaptive gate aggregates bands into the final representation. To control redundancy and improve generalization, we regularize training with a frequency-domain IB term that allocates capacity across bands (Wiener-like shrinkage with shut-off of weak bands). We further introduce a cross-modal spectral consistency loss that aligns modalities within each band. The model is jointly optimized with the standard recommendation loss. Extensive experiments on three real-world datasets demonstrate that FITMM consistently and significantly outperforms advanced baselines. The source code is available at: \url{https://github.com/llm-ml/FITMM}.
\end{abstract}


\begin{CCSXML}
<ccs2012>
   <concept>
       <concept_id>10002951.10003317.10003347.10003350</concept_id>
       <concept_desc>Information systems~Recommender systems</concept_desc>
       <concept_significance>500</concept_significance>
       </concept>
   <concept>
       <concept_id>10002951.10003317.10003371.10003386</concept_id>
       <concept_desc>Information systems~Multimedia and multimodal retrieval</concept_desc>
       <concept_significance>500</concept_significance>
       </concept>
 </ccs2012>
\end{CCSXML}

\ccsdesc[500]{Information systems~Recommender systems}
\ccsdesc[500]{Information systems~Multimedia and multimodal retrieval}

\keywords{Frequency Representation, Multimodal Recommendation, Information Bottleneck, Graph Learning}


\maketitle

\section{INTRODUCTION}
With the surge of multimodal content in e-commerce, short video platforms, and social media, users are increasingly exposed to items described through diverse modalities such as product images, textual descriptions, videos, and reviews \cite{han2022modality,cen2020controllable,he2021click,zhou2023comprehensive,yang2024enhancing}. This has led to the emergence of Multi-Modal Recommender Systems (MMRS), which aim to leverage heterogeneous sources to better capture user preferences and improve recommendation performance~\cite{wei2019mmgcn,deldjoo2020recommender,liu2024rec, malitesta2024formalizing,yang2025structured}. Effectively modeling and integrating such multimodal information has become a central challenge in modern recommender system research \cite{wang2023cl4ctr,liu2024multimodal,li2025cross,liu2025joint,xu2025survey,zhao2025hierarchical,gu2025r4ec}.

Recent advances have centered on Graph Neural Networks (GNNs) to capture the high-order collaborative signals in user-item graphs \cite{liu2022multimodal,guo2024lgmrec,zhou2023enhancing,li2024multimodal,kim2024self}. The dominant paradigm, however, involves a critical flaw: after encoding, modalities are fused in the spatial domain using simple gating or attention mechanisms. This approach struggles with the inherent statistical discrepancies and semantic misalignment across heterogeneous modalities (e.g., image vs. text) \cite{yang2023modal, shang2024improving}. By forcing a direct projection into a shared space, these methods often lead to one modality's signal overpowering another or critical information being diluted during fusion \cite{li2024align,yi2024unified,bai2024multimodality}.

More fundamentally, this paradigm overlooks the intrinsic frequency domain structure of multimodal signals. Information is not monolithic; it exists on a spectrum where different frequencies carry distinct semantic weight. Low-frequency components typically encode stable, cross-modal semantics (like general categories or style), whereas high-frequency components capture the fine-grained, modality specific details that are crucial for personalization \cite{ong2024spectrum,du2023frequency}. Current models, by fusing signals in the spatial domain, indiscriminately mix these components, leading to over-smoothing where dominant low-frequency signals drown out valuable high-frequency nuances. This lack of spectral awareness points to a deeper information-theoretic failure. An ideal model should not attempt to preserve all information, but rather isolate and retain only the most predictive frequency bands for the recommendation task. Without a principled mechanism to disentangle and filter information, current models create representations that are bloated with redundancy from low-frequency components and corrupted by noise from high-frequency ones. This compromises the model's expressive power and generalization ability, underscoring the need for a new, theoretically-grounded approach.

To address these limitations, we present \textbf{FITMM}, a frequency-aware information-theoretic framework for multimodal recommendation. Our approach is grounded in a spectral view: an orthogonal or tight-frame transform approximately block-diagonalizes bandwise covariances, and the Gaussian Information Bottleneck (IB) decouples across frequency bands. On this foundation, FITMM instantiates a principled ``separate--then--fuse'' pipeline. Concretely, we first build graph-enhanced representations for ID, visual, and textual signals and perform modality-wise spectral decomposition to obtain orthogonal bands. Within each band, we form a lightweight multimodal component to capture cross-modal collaboration while preserving band structure; a residual, task-adaptive gate then aggregates bands into the final representation. To control redundancy and improve generalization, we regularize with a global frequency-domain IB term that allocates capacity across bands (Wiener-style shrinkage of informative directions with shut-off of weak bands). We further introduce a cross-modal spectral consistency loss that aligns modalities. The model is trained end-to-end with a standard objective plus the IB-inspired and consistency terms.

In summary, the main contributions of this work are as follows:
\begin{itemize}
    \item We establish a spectral information-theoretic basis. Under an orthogonal or tight-frame transform that approximately block-diagonalizes bands, the Gaussian IB decouples, supporting a frequency-first separate–then–fuse design.
    \item We instantiate this foundation in FITMM, with modality-wise spectral decomposition, within-band multimodal interaction, residual task-adaptive gating, and a global frequency-domain IB term for capacity allocation.
    \item We conduct extensive experiments on three real-world datasets with standard protocols, demonstrating that FITMM's theoretically grounded design leads to significant performance gains over advanced baselines.
\end{itemize}

\section{RELATED WORK}
\subsection{Multimodal Recommendation}
The increasing availability of multimodal data in platforms such as e-commerce and short video services has notably expanded the scope of recommender systems. To move beyond the limitations of traditional collaborative filtering methods \cite{koren2009matrix,rendle2012bpr}, recent studies have explored incorporating various content modalities, including images, textual descriptions and audio, to better capture user preferences and address data sparsity \cite{xu2023musenet,yu2022graph,han2022modality,yang2023based}. Early approaches often treated multimodal information as auxiliary input and combined it with high-order feature interactions to improve performance \cite{chen2019personalized,deldjoo2021content,he2016vbpr,kang2017visually}. Building on early multimodal fusion strategies, recent research \cite{yang2023multimodal,zhou2023mmrec,geng2023vip5} advances toward fine-grained modeling. To further address the challenges of modality fusion and noise control, several advanced methods have been developed. MMIL \cite{yang2024multimodal} introduces a multi-intention framework that jointly learns intention prototypes and cross-modal semantics for personalized recommendation. AlignRec \cite{liu2024alignrec} improves integration by decomposing the learning objective into alignment tasks at the modality, ID, and user-item levels. SMORE \cite{ong2024spectrum} transforms multimodal signals into the frequency domain to suppress modality-specific noise and incorporates spectral fusion with graph-based modeling. PromptMM \cite{wei2024promptmm} explores a knowledge distillation strategy that combines prompt tuning with modality-aware weighting. In addition, large language models have been widely applied in various downstream tasks\cite{karra2024interarec,chang2025survey,yang2025toward,chen2025tourrank,ping2025hdlcore,li2025climatellm,ye2024domain}. Recent studies have brought notable improvements to recommendations \cite{ye2025harnessing,liu2024rec,xiang2024neural,xia2025hierarchical,xia2025trackrec}.

\subsection{Graph-augmented Recommendation}

Graph-based modeling has become a prominent approach for capturing the relational structure between users and items \cite{tao2020mgat,liu2023multimodal,liu2023megcf}. By exploiting user-item interaction graphs, models such as MMGCN \cite{wei2019mmgcn} and GRCN \cite{wei2020graph} enhance node representations through topology-aware message propagation. LATTICE \cite{zhang2021mining} incorporates higher-order affinities, while FREEDOM \cite{zhou2023tale} adopts denoising strategies to leverage frozen item-item relations for more stable learning. Recent advances further integrate graph learning with self-supervised and contrastive techniques. SGL \cite{wu2021self} applies self-discrimination on user-item bipartite graphs to improve robustness, and MMGCL \cite{yi2022multi} constructs multimodal views using modality-aware augmentations such as edge dropout and masking. SLMRec \cite{tao2022self} and BM3 \cite{zhou2023bootstrap} also employ feature-level augmentations and contrastive objectives to strengthen modality associations. Additional methods such as DualGNN \cite{wang2021dualgnn} and MGCL \cite{liu2023multimodal} integrate multimodal inputs with graph structures and user intention modeling, achieving better adaptability in dynamic content environments. DiffMM \cite{jiang2024diffmm} introduces a diffusion-based multimodal framework that aligns cross-modal signals with collaborative patterns. Together, these graph-enhanced methods emphasize the growing significance of combining structural relations, multimodal content, and contrastive learning for more robust and expressive recommendation.



\section{The Proposed Model}
In this section, we give a detailed introduction to the proposed method. The overall framework of the model is shown in Fig.~\ref{overall_model}.

\begin{figure*}
  \centering
  \includegraphics[width=\linewidth]{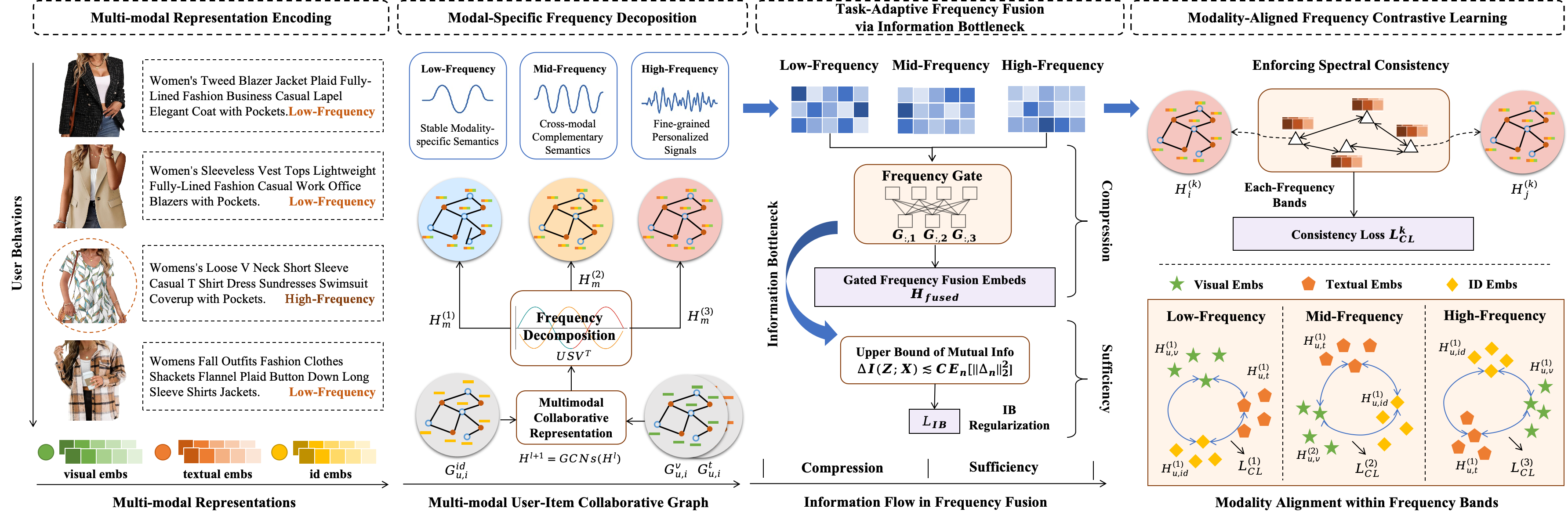}
  \caption{The overall architecture of the proposed FITMM framework, which integrates modality-specific frequency decomposition, task-adaptive fusion, and information-theoretic regularization to enable fine-grained and robust recommendation.}

  \Description{Different Attention Layer size.}
  \label{overall_model}
\end{figure*}

\subsection{A Spectral Information-Theoretic View of Multimodal Fusion}
\label{sec:spectral_foundation}

In multimodal recommendation, items are typically described using heterogeneous signals such as images and textual descriptions. These signals jointly capture the underlying user–item interaction patterns. Existing methods largely follow the classical paradigms of early or late fusion through concatenation or gating mechanisms \cite{wei2019mmgcn,zhou2023tale}. This implicitly assumes semantic information is uniformly distributed across the representation space. However, we argue that this assumption is flawed, as empirical evidence suggests otherwise:

\begin{itemize}
    \item \textbf{Cross-modal Collaborative Semantics} (e.g., visual style, coarse categories, general sentiment), which are essential for recommendation, tend to reside in stable, \emph{low-frequency} components of the signal.
    \item \textbf{Modality-specific Details and Noise} (e.g., fine-grained visual textures, specific writing styles, or background noise) often lie in more volatile \emph{mid- to high-frequency} bands and vary significantly across modalities.
\end{itemize}

Fusing signals in the spatial domain \emph{before} disentangling these components forces the model to learn \emph{both} (a) how to extract the shared collaborative signal and (b) how to suppress modality-specific noise \emph{simultaneously}. This entanglement increases optimization complexity and risks diluting the core collaborative signal that drives effective recommendations.

To ground this design beyond heuristics, we pose a first-principles question:

\begin{quote}
\itshape
\textbf{Under what structural and capacity assumptions does a frequency-domain ``separate–then–fuse'' strategy provably retain at least as much \emph{collaborative} information as direct spatial fusion?}
\end{quote}


\paragraph{From filter banks to information bottlenecks.}
Classical multiresolution analysis and perfect-reconstruction filter banks (wavelets, Laplacian pyramids) show that signals can be stably decomposed into orthogonal (or tight-frame) subbands and then processed and recombined with \emph{perfect (or near-perfect) reconstruction under standard PR conditions} \cite{vetterli1995wavelets,mallat1999wavelet,burt1987laplacian}. In parallel, Wiener theory shows that in additive signal+noise models, the MMSE estimator is \emph{equivalently a spatial convolution whose frequency response performs per-band SNR weighting}, rather than a single \emph{band-agnostic} spatial weight \cite{kailath1981lectures,pratt2006generalized,robinson1967principles,chen2006new,gardner2002cyclic}. Together, these suggest that a decomposition\,$\rightarrow$\,per-band processing\,$\rightarrow$\,fusion pipeline is a principled way to preserve informative structure while attenuating nuisance variability.

\begin{proposition}[Bandwise Gaussian Information Bottleneck (GIB): existence of a band-separable optimum]
\label{prop:gib_bandwise}
Let $X=\mathrm{concat}(X^{(1)},\dots,X^{(K)})$ be modality-aware frequency components obtained by an orthogonal (or tight-frame) decomposition of multimodal embeddings (e.g., SVD/wavelet/graph-wavelet). Suppose $(X,Y)$ are jointly Gaussian, and \emph{both} $\Sigma_{XX}\!=\!\mathrm{Cov}(X)$ and $\Sigma_{X|Y}\!=\!\mathrm{Cov}(X|Y)$ are \emph{block-diagonal in the same band basis}. Consider the Information Bottleneck Lagrangian $\mathcal{L}_{\mathrm{IB}}(p(Z|X))=I(Z;X)-\beta I(Z;Y)$. Then:
\begin{enumerate}
    \item There \emph{exists} an optimal encoder that respects the band partition (i.e., $Z=\bigoplus_k Z^{(k)}$ with $p(Z|X)=\prod_k p(Z^{(k)}|X^{(k)})$), for which the objective \emph{equals the sum} of bandwise objectives: $\mathcal{L}_{\mathrm{IB}}=\sum_{k=1}^K \mathcal{L}^{(k)}_{\mathrm{IB}}$.
    \item For each band, an optimal encoder is a (noisy) \emph{linear} projection onto the eigenvectors of the generalized eigenproblem associated with the pair $(\Sigma_{XX},\Sigma_{X|Y})$; \emph{equivalently}, onto the eigenvectors of the CCA operator $\Sigma_{XX}^{-1}\Sigma_{XY}\Sigma_{YY}^{-1}\Sigma_{YX}$, yielding a \emph{band-specific shrinkage/gating} that is monotone in a relevance-to-redundancy tradeoff \cite{tishby2000information,chechik2003information}.
\end{enumerate}
\end{proposition}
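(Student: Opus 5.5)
The plan is to reduce everything to the known closed-form solution of the Gaussian Information Bottleneck (Chechik et al.) and then show that the block structure of the covariances carries over to that solution.

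\textbf{Step 1: the GIB solution.} For jointly Gaussian $(X,Y)$, an optimal encoder can be taken linear-Gaussian, $Z = AX + \xi$ with $\xi\sim\mathcal N(0,I)$. The rows of $A$ are scaled left eigenvectors $v_i^\top$ of $M=\Sigma_{X|Y}\Sigma_{XX}^{-1}$, with eigenvalues $\lambda_i\in[0,1]$. The gains are
\[
\alpha_i=\sqrt{\frac{\beta(1-\lambda_i)-1}{\lambda_i\, v_i^\top\Sigma_{XX}v_i}}
\quad\text{when }\beta>(1-\lambda_i)^{-1},
\]
and $\alpha_i=0$ otherwise. The spectrum of $M$ is $\{\lambda_i\}=\{1-\rho_i^2\}$, where $\rho_i^2$ are the eigenvalues of the CCA operator $\Sigma_{XX}^{-1}\Sigma_{XY}\Sigma_{YY}^{-1}\Sigma_{YX}$. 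This follows from the identity
\[
\Sigma_{X|Y}=\Sigma_{XX}-\Sigma_{XY}\Sigma_{YY}^{-1}\Sigma_{YX},
\]
so that $I-\Sigma_{X|Y}\Sigma_{XX}^{-1}$ is a similarity transform of the CCA operator. This gives the ``equivalently'' clause of item 2. The gate $\alpha_i$ is monotone increasing in the relevance $\rho_i^2=1-\lambda_i$, and it switches off below the threshold $\beta^{-1}$. This is the claimed band-specific shrinkage/gating.

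\textbf{Step 2: block-diagonalize the eigenproblem.} Suppose $\Sigma_{XX}=\mathrm{diag}(\Sigma^{(k)}_{XX})$ and $\Sigma_{X|Y}=\mathrm{diag}(\Sigma^{(k)}_{X|Y})$ in the same band basis. Then $\Sigma_{XX}^{-1}$ and hence $M$ are block-diagonal. A block-diagonal matrix admits an eigenbasis made of vectors each supported on a single band. This holds even when eigenvalues coincide across bands: we are free to choose such a basis, which is why the statement claims only \emph{existence}. Choosing the optimal $A$ from this basis, every row of $A$ touches only $X^{(k)}$ for one $k$. Grouping rows by band gives $Z=\bigoplus_k Z^{(k)}$ with $Z^{(k)}=A^{(k)}X^{(k)}+\xi^{(k)}$ and independent noise, so $p(Z|X)=\prod_k p(Z^{(k)}|X^{(k)})$. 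Each $A^{(k)}$ is exactly the GIB solution for the pair $(\Sigma^{(k)}_{XX},\Sigma^{(k)}_{X|Y})$, which completes item 2 bandwise.

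\textbf{Step 3: additivity of the objective.} Write the terms in log-det form:
\[
I(Z;X)=\tfrac12\log\det(A\Sigma_{XX}A^\top+I),\qquad
I(Z;Y)=I(Z;X)-\tfrac12\log\det(A\Sigma_{X|Y}A^\top+I).
\]
The second expression uses the Markov chain $Z-X-Y$, which gives $h(Z|Y)=\tfrac12\log\det(2\pi e(A\Sigma_{X|Y}A^\top+I))$. With $A$ block-structured, $A\Sigma_{XX}A^\top$ and $A\Sigma_{X|Y}A^\top$ are block-diagonal, and $\log\det$ of a block-diagonal matrix is the sum over blocks. Hence
\[
\mathcal L_{\mathrm{IB}}=\sum_k\Big[I(Z^{(k)};X^{(k)})-\beta I(Z^{(k)};Y)\Big]=\sum_k \mathcal L^{(k)}_{\mathrm{IB}},
\]
where $I(Z^{(k)};Y)$ is computed using the conditional covariance block $\Sigma^{(k)}_{X|Y}$.

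\textbf{Main obstacle.} The delicate point is optimality over \emph{all} encoders, not just linear ones, together with the eigenvalue degeneracy. I would handle it by citing Chechik et al.'s theorem that the global IB optimum for Gaussian variables is attained by such a linear-Gaussian encoder, and that its value depends only on the spectrum $\{\lambda_i\}$. Since the band-supported basis has the same spectrum, it attains the same optimal value.

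A subtlety to flag is that the per-band objective must involve the full $Y$. This does not break additivity, because the conditional block-diagonality assumption makes the $Z^{(k)}$ conditionally independent given $Y$. That conditional independence is precisely what the second assumption on $\Sigma_{X|Y}$ buys. Without it, the marginal block structure of $\Sigma_{XX}$ alone would not suffice.
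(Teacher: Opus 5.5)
Your proposal is correct and follows essentially the same route as the paper's sketch. Both arguments combine log-det additivity of $I(Z;X)$ and $I(Z;Y)$ for band-respecting encoders, under joint block-diagonality of $\Sigma_{XX}$ and $\Sigma_{X|Y}$, with Chechik et al.'s linear--Gaussian GIB solution. Your block-diagonal eigenproblem for $\Sigma_{X|Y}\Sigma_{XX}^{-1}$ makes concrete the step the paper only attributes to ``standard convexity/orthogonality arguments.''

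One detail should be stated explicitly. Inside a degenerate eigenspace of a single band, the band-supported eigenvectors must be chosen $\Sigma_{XX}$-orthogonal; cross-band pairs are orthogonal automatically by block-diagonality. Your claim that the value depends only on the spectrum relies on $A\Sigma_{XX}A^\top$ and $A\Sigma_{X|Y}A^\top$ being diagonal. You do get this for free by taking each $A^{(k)}$ to be the per-band GIB solution, but the proof should say so.
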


\noindent\textit{Proof sketch.}
For Gaussians, $I(Z;X)$ and $I(Z;Y)$ reduce to $\log\det$ expressions in covariances. Under block-diagonality \emph{in the same band basis}, any band-respecting encoder yields additive mutual informations over bands; by standard convexity/orthogonality arguments for the Gaussian IB, an optimal solution can be chosen within this band-respecting class. The per-band minimizer is the Gaussian-IB linear–noisy projection aligned with the stated eigenproblems (closely related to CCA) with coefficients controlled by $\beta$ \cite{tishby2000information,chechik2003information}, unique up to within-band rotations.

\begin{corollary}[Connection to Wiener weighting]
\label{cor:wiener}
In the special case where $Y=S$ (or any invertible linear observation of $S$ on the relevant subspace) and $X^{(k)}=S^{(k)}+N^{(k)}$ with independent WSS signal/noise per band, the bandwise GIB encoder reduces to a linear shrinkage proportional to per-band SNR, recovering the classical Wiener form
$G^{(k)}(\omega)=\tfrac{\Phi_{SS}^{(k)}(\omega)}{\Phi_{SS}^{(k)}(\omega)+\Phi_{NN}^{(k)}(\omega)}$ \emph{up to a $\beta$-dependent scaling}. As $\beta\!\to\!\infty$ it approaches the pure MMSE/Wiener solution; for finite $\beta$ it yields SNR-dependent shrinkage.
Thus, a \emph{learnable} gate per frequency band implements a data-driven Wiener-like relevance filter \cite{robinson1967principles,chen2006new,gardner2002cyclic}.
\end{corollary}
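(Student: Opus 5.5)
We prove Corollary~\ref{cor:wiener} by specializing Proposition~\ref{prop:gib_bandwise}. First, use part~(1) of the proposition to reduce to a single band $k$. Under the model $X^{(k)}=S^{(k)}+N^{(k)}$ with $S$ and $N$ independent, we have $\Sigma_{XX}=\Sigma_{SS}+\Sigma_{NN}$. If $Y=S$, or an invertible linear map of $S$ on the relevant subspace, then $\Sigma_{X|Y}=\Sigma_{NN}$, because conditioning on $Y$ fixes $S$. Both operators are block-diagonal in the band basis, since signal and noise are WSS per band and bands are orthogonal. So the hypotheses of the proposition hold. Invertibility of the observation map matters only because mutual information is invariant under invertible transforms of $Y$, so $I(Z;Y)=I(Z;S)$.

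Second, diagonalize within the band. WSS implies that (shift-invariant) covariances are diagonalized by the Fourier basis, with eigenvalues $\Phi^{(k)}_{SS}(\omega)$ and $\Phi^{(k)}_{NN}(\omega)$. Hence the generalized eigenproblem for $(\Sigma_{XX},\Sigma_{X|Y})$ is solved by Fourier modes. Each mode $\omega$ has eigenvalue
\[
\lambda(\omega)=\frac{\Phi_{NN}}{\Phi_{SS}+\Phi_{NN}}=1-\rho(\omega),
\qquad
\rho(\omega)=\frac{\Phi_{SS}}{\Phi_{SS}+\Phi_{NN}},
\]
where $\rho(\omega)$ is the CCA eigenvalue of $\Sigma_{XX}^{-1}\Sigma_{XS}\Sigma_{SS}^{-1}\Sigma_{SX}$. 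This $\rho(\omega)$ is exactly the Wiener gain $G^{(k)}(\omega)$.

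Third, insert these modes into the Chechik et al.\ closed form cited in part~(2) of the proposition. The optimal encoder is $Z=AX+\xi$, and its rows are the eigenvectors, here Fourier modes, scaled by
\[
\alpha_\omega^2=\frac{\beta(1-\lambda_\omega)-1}{\lambda_\omega r_\omega},
\qquad
r_\omega=v_\omega^\top\Sigma_{XX}v_\omega=\Phi_{SS}+\Phi_{NN}.
\]
These are active only when $\beta>1/(1-\lambda_\omega)=1/\rho(\omega)$; otherwise the band shuts off. The MMSE readout of $S$ from $Z$ at mode $\omega$ is then
\[
\hat S(\omega)=\frac{\Phi_{SS}\,\alpha^2}{\alpha^2(\Phi_{SS}+\Phi_{NN})+1}\,X(\omega).
\]
Substituting $\alpha^2$ gives an effective filter $G_\beta(\omega)=\rho(\omega)\,c_\beta(\omega)$, with $c_\beta(\omega)\in[0,1)$ monotone in $\beta$ and $\rho$. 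This is the ``Wiener form up to a $\beta$-dependent scaling''. As $\beta\to\infty$ we get $\alpha^2\to\infty$ and $c_\beta\to1$, so $G_\beta\to\rho=G^{(k)}$, the MMSE/Wiener solution. For finite $\beta$, the factor $c_\beta$ is SNR-dependent shrinkage with hard shut-off below threshold.

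The main obstacle is making ``reduces to Wiener'' precise. The IB encoder itself is defined only up to invertible rescaling of $Z$, so the claim must be stated about the induced estimator $\mathbb{E}[S\mid Z]$, not the raw projection. I would fix this convention explicitly, and compute $c_\beta=1-\lambda_\omega/(\beta(1-\lambda_\omega))$, or the analogous expression, to verify monotonicity. A secondary care point is that the WSS assumption within a finite band is really circulant, or asymptotically Toeplitz, structure. I would assume circulant structure so that the Fourier diagonalization is exact. The final remark, that a learnable per-band gate realizes this filter, is interpretive and follows because $G_\beta$ is a per-frequency scalar.
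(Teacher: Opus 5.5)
Your route is correct and is essentially the one the paper relies on. The paper gives no separate proof of Corollary~\ref{cor:wiener}; it rests on the sketch of Proposition~\ref{prop:gib_bandwise} (the Chechik et al.\ closed form) plus Wiener-filter citations. Your specialization ($\Sigma_{X|Y}=\Sigma_{NN}$, Fourier modes as generalized eigenvectors with $1-\lambda_\omega=\rho(\omega)$, then the induced estimator $\mathbb{E}[S\mid Z]$) fills that in. Your point that the claim must be about the induced estimator rather than the raw projection is an improvement on the paper's phrasing, since mutual information is invariant under invertible reparametrizations of $Z$.

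One correction: your guessed factor $c_\beta=1-\lambda_\omega/(\beta(1-\lambda_\omega))$ is wrong. Carrying out your own substitution gives $\alpha_\omega^2=(\beta\rho-1)/\Phi_{NN}$, hence $\alpha_\omega^2(\Phi_{SS}+\Phi_{NN})+1=(\beta-1)\Phi_{SS}/\Phi_{NN}$, and
\[
G_\beta(\omega)=\frac{\beta}{\beta-1}\Big[\rho(\omega)-\tfrac{1}{\beta}\Big]_+ ,\qquad c_\beta=\frac{\beta\rho-1}{\rho(\beta-1)}\ \text{on the active set}.
\]
This gain vanishes continuously at the threshold $\rho=1/\beta$, consistent with $\alpha_\omega^2\to 0$ there. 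Your expression would instead give $c_\beta=\rho$ at threshold, i.e.\ a jump. The qualitative claims you need still hold for the correct formula: $0<c_\beta<1$, increasing in $\beta$ and $\rho$, and $G_\beta\to\rho$ as $\beta\to\infty$.

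Note also that $c_\beta$ depends on $\omega$ through $\rho$. What you actually establish is that the effective gate is the Wiener gain under a $\beta$-dependent affine map followed by thresholding. State the result in that sharper form rather than asserting it is literally a $\beta$-dependent rescaling of $G^{(k)}(\omega)$.
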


\paragraph{Why frequency-aware fusion helps GNNs.}
On graphs, message passing corresponds to low-pass filtering in the Laplacian spectrum; deep stacks tend toward over-smoothing (loss of high-frequency or heterophilous cues) \cite{shuman2013emerging,ortega2018graph,sandryhaila2013discrete,oono2019graph,xu2019graph,hammond2011wavelets}. An explicit spectral decomposition (graph wavelets / spectral filters) with \emph{bandwise} fusion mitigates this by preserving both low-frequency collaborative semantics and mid-/high-frequency, modality-specific details.

\paragraph{Compatibility with deep networks’ spectral bias.}
Deep nets preferentially learn low frequencies; high frequencies are learned late and are fragile \cite{rahaman2019spectral}. Providing explicit frequency channels (and even Fourier features when appropriate) counteracts this bias and improves modeling of fine detail \cite{tancik2020fourier}. This aligns with our design: disentangle into frequency bands, apply learnable, IB-regularized gates, and then fuse.



\subsection{Disentangled Spectral Representation and Adaptive Fusion}

Building on the theoretical foundation of frequency-based information preservation, we now present a practical framework that operationalizes the principle of \emph{“separate first, fuse later”}. Specifically, we design a representation pipeline that first disentangles spectral components across modalities to isolate shared semantics from modality-specific residuals, and then performs adaptive fusion guided by task relevance. The pipeline consists of three key stages: structural modeling via item–user graphs, spectral decomposition of modality inputs, and frequency-aware fusion tailored to downstream objectives.

\subsubsection{Graph-based Multimodal Representation and Decomposition}

Given a user-item interaction graph $\mathcal{G} = (\mathcal{U} \cup \mathcal{I}, \mathcal{E})$, we employ independent Graph Convolutional Networks (GCNs) to preserve modality-specific semantics and topological patterns. The layer-wise propagation rule follows:
\begin{equation}
H^{(l+1)} = \sigma(\tilde{A} H^{(l)} W^{(l)}), \quad H^{(0)} = E,
\end{equation}
where $E \in \mathbb{R}^{N \times d}$ is the input feature matrix, $\tilde{A}  \in \mathbb{R}^{N \times N}$ is the normalized adjacency matrix of the interaction graph, $W^{(l)}  \in \mathbb{R}^{d \times d}$ is a trainable transformation, and $\sigma(\cdot)$ is a non-linear activation. This is applied to $H_{\text{id}}, H_{\text{v}}, H_{\text{t}} \in \mathbb{R}^{N \times d}$ to produce modality-specific representations for users and items. To enhance high-order reasoning, we optionally integrate user-user and item-item graphs, constructed via collaborative similarity or content similarity.

To further disentangle multi-scale semantics, we apply frequency decomposition on each modality through Singular Value Decomposition, enabling an interpretable and orthogonal projection into frequency subspaces:
\begin{equation}
H_m = U S V^\top = \sum_{k=1}^{K} U S^{(k)} V^\top = \sum_{k=1}^{K} H_m^{(k)},
\end{equation}
where $S^{(k)}$ denotes the $k$-th frequency band after partitioning the diagonal matrix $S$ by magnitude. Each $H_m^{(k)}$ thus represents the latent structure encoded in the $k$-th frequency range.

We repeat this decomposition independently for ID, visual, and textual embeddings. Then, at each frequency band $k$, we construct multimodal representations by concatenation:
\begin{equation}
H^{(k)} = \text{concat}(H_{\text{id}}^{(k)}, H_{\text{v}}^{(k)}, H_{\text{t}}^{(k)}).
\end{equation}
This allows the model to preserve modality-specific semantics while enabling downstream frequency-aware operations.

\subsubsection{Task-Adaptive Frequency Fusion}

Not all frequency components contribute equally to the recommendation task: low-frequency bands typically encode general semantics, while high-frequency bands carry fine-grained personalization cues. To selectively retain task-relevant signals, we design a gated frequency fusion mechanism that dynamically reweights each frequency component based on its informativeness. We first compute the gating vector $G \in \mathbb{R}^{K}$ from the original task representation $H$:
\begin{equation}
G = \sigma(W_g H + b_g),
\end{equation}
where $W_g$ and $b_g$ are learnable parameters and $\sigma(\cdot)$ is a sigmoid activation to ensure $G_k \in (0,1)$.

The final representation is obtained as a soft aggregation over frequency-specific features:
\begin{equation}
H_{\text{fused}} = \sum_{k=1}^{K} G_k \cdot H^{(k)}.
\end{equation}

Unlike traditional late fusion or modality attention, our method provides a fine-grained frequency selection mechanism, enabling the model to retain only semantically and structurally aligned information. This formulation underpins the subsequent frequency-domain IB regularization, with the gate controlling a global capacity budget that drives reverse water-filling across bands.

\subsection{Incremental Frequency-Domain Information Bottleneck}
\label{sec:freq_ib_regularization}

While \S\ref{sec:spectral_foundation} shows that (under approximate block-diagonal covariances in the \emph{same} band basis) the Gaussian-IB problem \emph{decouples at optimum} (i.e., it admits a band-respecting optimal encoder whose optimal value equals the sum of per-band optima), here we study how a \emph{global frequency-domain IB budget} allocates capacity across bands and why the induced gates improve generalization.

\paragraph{Bandwise IB with a global budget.}
Let \(H=\bigoplus_{k=1}^{K} H^{(k)}\) be the bandwise decomposition (orthonormal or tight-frame).
We form pre-fusion variables
\begin{equation}
\label{eq:band-gated-encoder}
\tilde Z^{(k)} = G^{(k)} \odot H^{(k)} + \xi^{(k)}, \qquad
Z = A\Big(\bigoplus_{k=1}^{K} \tilde Z^{(k)}\Big).
\end{equation}
\noindent with encoder noise \(\xi^{(k)}\) and a band-respecting isometry \(A\).

\paragraph{Identity-preserving gating} We use \(G^{(k)}=\mathbf 1+\Delta^{(k)}\) with \(\Delta^{(k)}\) initialized near zero, and control \emph{incremental} informations relative to the identity baseline \(\tilde Z^{(k)}_0\) (where \(\Delta^{(k)}=0\)):
$
\Delta I_k^{\mathrm{in}}:=I(\tilde Z^{(k)};H^{(k)})-I(\tilde Z^{(k)}_0;H^{(k)})$,
$\Delta I_k^{\mathrm{rel}}:=I(\tilde Z^{(k)};Y)-I(\tilde Z^{(k)}_0;Y).$
We allocate a single \emph{global} budget on the increments:
\begin{equation}
\max_{\{p(\tilde Z^{(k)}|H^{(k)})\}} \;\sum_{k=1}^{K} \Delta I_k^{\mathrm{rel}}
\quad\text{s.t.}\quad
\sum_{k=1}^{K} \Delta I_k^{\mathrm{in}} \;\le\; \mathcal{B}.
\label{eq:freq_ib_budget_incremental}
\end{equation}
Under the same Gaussian/block-diagonal assumptions, \emph{decoupling at optimum} and KKT/\emph{reverse water-filling} remain valid for~\eqref{eq:freq_ib_budget_incremental}. Moreover, for a scalar direction in band \(k\) with gate \(1+\delta\) and \(\gamma_k:=\mathrm{Var}(H^{(k)})/\mathrm{Var}(\xi^{(k)})\),
\begin{equation}
\Delta I_k^{\mathrm{in}}
\;=\;\tfrac12 \log\!\frac{1+\gamma_k(1+\delta)^2}{\,1+\gamma_k\,}
\;\le\; c_k\,\delta^2,\qquad c_k:=\frac{\gamma_k}{1+\gamma_k},
\end{equation}
and analogously \(\sum_k \Delta I_k^{\mathrm{in}}\lesssim \|\Delta\|_2^2\).

\begin{proposition}[Reverse water-filling for frequency-IB]
\label{prop:ib_waterfilling}
Assume a Gaussian band model where \((H,Y)\) are jointly Gaussian and bandwise covariance blocks are (approximately) block-diagonal in the same band basis. Within each band, the Gaussian-IB optimal encoders are noisy linear projections aligned with a generalized eigenproblem for \((\Sigma_{HH},\Sigma_{H|Y})\), equivalently the CCA operator \(\Sigma_{HH}^{-1}\Sigma_{HY}\Sigma_{YY}^{-1}\Sigma_{YH}\) \cite{chechik2003information}.
Then the constrained program~\eqref{eq:freq_ib_budget_incremental} satisfies KKT conditions with \emph{reverse water-filling}:
there exists \(\lambda>0\) s.t.
\[
\frac{\partial \Delta I_k^{\mathrm{rel}}}{\partial \Delta I_k^{\mathrm{in}}}=\lambda \ \text{for active bands,}\qquad
\frac{\partial \Delta I_k^{\mathrm{rel}}}{\partial \Delta I_k^{\mathrm{in}}}<\lambda \ \text{for inactive bands.}
\]
In particular, along active eigen-directions the optimal gate is a \emph{Wiener-like} shrinkage (up to a $\beta$-dependent scaling), with shut-off below a data-dependent threshold; when \(Y=S\) (or an invertible linear observation of \(S\)) and \(X^{(k)}=S^{(k)}\!+\!N^{(k)}\) (independent WSS), letting \(\beta\!\to\!\infty\) recovers MMSE/Wiener.
\end{proposition}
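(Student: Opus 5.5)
We reduce the global problem~\eqref{eq:freq_ib_budget_incremental} to a resource-allocation problem across bands and then read off the KKT conditions. First, we invoke Proposition~\ref{prop:gib_bandwise}. Under joint Gaussianity and block-diagonality of $\Sigma_{HH}$ and $\Sigma_{H|Y}$ in the same band basis, there is an optimal encoder that respects the band partition. Restricting to such encoders costs nothing. Both $\sum_k \Delta I_k^{\mathrm{rel}}$ and $\sum_k \Delta I_k^{\mathrm{in}}$ are then sums of bandwise terms, because the identity-baseline informations $I(\tilde Z_0^{(k)};\cdot)$ are constants and mutual informations add across independent bands. Within band $k$ we again use the proposition: an optimal encoder is a noisy linear projection onto the generalized eigenvectors $v_{k,i}$ of $(\Sigma_{HH},\Sigma_{H|Y})$, with eigenvalues $\lambda_{k,i}\in[0,1]$. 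Along each such direction we parametrize the encoder by a scalar SNR $\rho_{k,i}\ge 0$, which is set by the gate $(1+\delta_{k,i})^2\gamma_{k,i}$.

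Second, we write the per-direction value functions in closed form. For a scalar direction with SNR $\rho$, the Gaussian formulas give $I_{\mathrm{in}}(\rho)=\tfrac12\log(1+\rho)$ and $I_{\mathrm{rel}}(\rho)=\tfrac12\log\frac{1+\rho}{1+\lambda_{k,i}\rho}$; this is the Chechik et al.\ computation. Subtracting the baseline values at $\rho_0=\gamma_{k,i}$ gives the increments. Reparametrizing by $r=\Delta I^{\mathrm{in}}_{k,i}$, the relevance increment $\phi_{k,i}(r)$ is a smooth concave function of $r$. We check this by writing $\rho=(1+\rho_0)e^{2r}-1$ and differentiating twice: the derivative $\phi'_{k,i}(r)=(1-\lambda_{k,i})(1+\rho)/(1+\lambda_{k,i}\rho)\cdot(\ldots)$ is decreasing in $\rho$, hence in $r$. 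The marginal relevance per unit input information is therefore bounded and monotonically decreasing. This is the diminishing-returns structure that reverse water-filling needs.

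Third, the problem becomes the concave program $\max\sum_{k,i}\phi_{k,i}(r_{k,i})$ subject to $\sum r_{k,i}\le\mathcal B$. We take $r_{k,i}\ge r^{\min}_{k,i}$ as the lower limit, corresponding to the gate $1+\delta\to 0$, i.e.\ shut-off. Slater's condition holds, so the KKT conditions are necessary and sufficient. If the budget binds, there is a multiplier $\lambda>0$ with $\phi'_{k,i}=\lambda$ on interior (active) directions and $\phi'_{k,i}\le\lambda$ at the lower boundary, where the gate shuts off. The displayed bandwise statement follows by aggregating directions within a band. Since $\phi'$ is decreasing, $\lambda$ acts as a water level: directions whose initial marginal gain (governed by $1-\lambda_{k,i}$, i.e.\ the canonical correlation) lies below the level are inactive. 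This is the data-dependent threshold. Solving $\phi'=\lambda$ for $\rho$ gives the optimal gate as a function of $\lambda_{k,i}$ and $\gamma_{k,i}$. Identifying $\lambda$ with $1/\beta$ from the Lagrangian form shows that it is a shrinkage factor $\propto \mathrm{SNR}/(1+\mathrm{SNR})$ up to a $\beta$-dependent scaling. In the special case $Y=S$ and $X^{(k)}=S^{(k)}+N^{(k)}$, we have $1-\lambda_{k,i}=\Phi_{SS}/(\Phi_{SS}+\Phi_{NN})$. Letting $\beta\to\infty$ (so $\lambda\to0$) makes every direction active with full relevance, and the projection reduces to the MMSE/Wiener filter, as in Corollary~\ref{cor:wiener}.

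The main obstacle is the reparametrization of the second step. The incremental quantities are measured relative to a nonzero identity baseline, and the shut-off boundary corresponds to a negative $\Delta I^{\mathrm{in}}$. Concavity and monotonicity of $\phi_{k,i}$ must therefore be verified on this shifted domain, not merely at $\rho\ge0$ from zero. We first attempt a direct second-derivative computation in $\rho$. If that becomes messy, we fall back on the fact that concavity in $r$ is preserved under the affine shift by the baseline constants. The "approximately block-diagonal" case is only handled heuristically, via continuity of the KKT solution in the covariances.
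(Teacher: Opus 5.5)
Your proposal is correct and follows the paper's own route, written out in detail. That route is: Gaussian-IB log-determinant formulas on the band blocks, noisy linear encoders along the generalized eigen-directions, and a single global budget whose KKT multiplier equalizes marginal relevance across active bands, as in Gaussian reverse water-filling; your concavity check supplies the sufficiency and threshold structure the paper leaves implicit. Two small cleanups: the marginal gain is exactly $\phi'_{k,i}=(1-\lambda_{k,i})/(1+\lambda_{k,i}\rho)$, and your ``main obstacle'' disappears because the increments are constant shifts of the absolute informations, so the program is the ordinary budgeted Gaussian IB with budget $\mathcal{B}+\sum_k I(\tilde Z_0^{(k)};H^{(k)})$ and the KKT conditions are unchanged.
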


\noindent\textit{Sketch.}
Gaussian-IB converts the informations to log-determinants on bandwise blocks and yields the stated linear encoders; a single budget on \(\sum_k \Delta I_k^{\mathrm{in}}\) equalizes marginal gain across active bands, mirroring Gaussian rate–distortion \cite{cover1999elements,ash2012information}.

\paragraph{Generalization via information control.}
Under standard conditions (bounded/sub-Gaussian losses, randomized encoders, i.i.d.\ samples),
\(
|\mathrm{gen}|\lesssim \sqrt{\tfrac{2}{n}I(Z;H)}=\sqrt{\tfrac{2}{n}\sum_k I_k^{\mathrm{in}}}
\)
\cite{xu2017information,bu2020tightening}.
Since \(I_k^{\mathrm{in}}=I(\tilde Z^{(k)}_0;H^{(k)})+\Delta I_k^{\mathrm{in}}\), controlling \(\sum_k \Delta I_k^{\mathrm{in}}\) controls the \emph{additional} information beyond the baseline, giving an explicit \(\mathcal{O}(\sqrt{\mathcal{B}/n})\) bound.

\paragraph{Operational surrogate (global coupling).}
For the budget surrogate we use one scalar per band \(G_{n,k}=1+\Delta_{n,k}\). Using the bound above and \(\log(1+x)\le x\),
$
\sum_k \Delta I^{\mathrm{in}}_{n,k}\ \lesssim\ \|\Delta_n\|_2^2,\qquad \Delta_n=(\Delta_{n,1},\ldots,\Delta_{n,K}).
$
We adopt the nonnegative, globally coupled surrogate on increments:
\begin{equation}
\label{eq:our_ib_surrogate_increment}
\mathcal{L}_{\mathrm{IB}}
\;=\;
\alpha\,\tfrac{1}{N}\sum_{n}\|\Delta_n\|_2^2
\;+\;
\mu\,\tfrac{1}{N}\sum_{n}\|\Delta_n\|_2\;\Big(\sum_{k}[\,\Delta_{n,k}-\phi_{+}\,]_+\Big),
\end{equation}
with \([x]_+=\max(x,0)\) and \(\alpha,\mu>0\).
The first term upper-bounds the incremental IB budget; the second induces cross-band competition via the shared factor \(\|\Delta_n\|_2\) (softplus is a smooth alternative).

\subsection{Auxiliary Regularization: Enforcing Spectral Consistency}
\label{sec:contrastive_learning}

Our framework is grounded in a spectral information-theoretic basis in which an orthogonal or tight-frame transform approximately block-diagonalizes bandwise covariances and the Gaussian Information Bottleneck decouples across frequency bands. To further strengthen the model's adherence to our first principle, we introduce a targeted auxiliary objective. This objective, termed Cross-Modal Spectral Consistency, ensures that for a given item, the representations from different modalities are explicitly aligned within each corresponding frequency band. 

Let $H^{(k)} = [H^{(k)}_{\text{id}} \| H^{(k)}_{\text{vis}} \| H^{(k)}_{\text{txt}}]$ be the concatenated representation for all items in the $k$-th frequency band, decomposed into its constituent modal parts. We enforce alignment by minimizing a squared cosine distance loss between all pairs of modalities:

\begin{equation}
\mathcal{L}_{\text{CL}} = \sum_{k=1}^{K} \sum_{u=1}^{N} \sum_{(i, j) \in \mathcal{P}} \left( 1 - \cos \left( H^{(k)}_{u,i}, H^{(k)}_{u,j} \right) \right)^2
\label{eq:consistency_loss}
\end{equation}
where $\mathcal{P}$ is the set of modality pairs, i.e., $\{(\text{id}, \text{vis}), (\text{id}, \text{txt}), (\text{vis}, \text{txt})\}$. This loss encourages the different modal views of an item to map to a consistent point in the embedding space for each frequency scale.

\subsection{Training and Optimization}
Our model aims to predict user-item interaction probabilities by leveraging historical behaviors and multimodal content. For each user $u$ and item $i$, the final representations $\mathbf{h}_u$ and $\mathbf{h}_i$ are obtained through frequency-adaptive learning, and the interaction score is computed as $\hat{y}_{ui} = \mathbf{h}_u^\top \mathbf{h}_i$. The model is trained using the binary cross-entropy loss:
\begin{equation}
\mathcal{L}_{\text{BCE}} = - \sum_{(u, i) \in \mathcal{D}} \left[ y_{ui} \log \hat{y}_{ui} + (1 - y_{ui}) \log (1 - \hat{y}_{ui}) \right],
\end{equation}
where $y_{ui} \in \{0,1\}$ denotes the label and $\mathcal{D}$ is the training set.

The final objective combines the recommendation loss with two regularization terms: the information bottleneck loss and the contrastive loss:
\begin{equation}
\mathcal{L} = \mathcal{L}_{\text{BCE}} + \lambda \mathcal{L}_{\text{IB}} + \eta \mathcal{L}_{\text{CL}},
\end{equation}
where $\lambda$ and $\eta$ control the strength of each regularizer.

\section{EXPERIMENTS}
In this section, we conduct extensive experiments to address the following research questions:

\begin{itemize}
    \item \textbf{RQ1:} - How does our proposed method perform compared to state-of-the-art (SOTA) multimodal baselines?
    \item \textbf{RQ2:} - Does our approach exhibit superior performance in cold-start scenarios compared to existing advanced models?
    \item \textbf{RQ3:} - How do frequency modeling and bottleneck regularization affect performance?
    \item \textbf{RQ4:} – What roles do different frequency components play in capturing modality-specific and shared semantics?
    \item \textbf{RQ5:} – How does the model improve representation quality for the cold-start prediction?
    \item \textbf{RQ6:} - How does the model behave under different hyperparameter settings?
\end{itemize}

\begin{table*}[t]
\centering
\caption{
Overall recommendation performance on three Amazon datasets. Our proposed FITMM consistently outperforms all baselines across all metrics, demonstrating its superior capability in modeling multimodal signals.
}
\label{tab:main_results}
\resizebox{\textwidth}{!}{
\begin{tabular}{lcccccccccccccccc}
\toprule
\textbf{Dataset} & \textbf{Metric} & BPR & LightGCN & VBPR & MMGCN & GRCN & DualGNN & SLMRec & LATTICE & BM3 & FREEDOM & DiffMM & MMIL & AlignRec & SMORE & \textbf{FITMM} \\
\midrule
\multirow{4}{*}{Baby} 
& R@10  & 0.0357 & 0.0479 & 0.0418 & 0.0413 & 0.0538 & 0.0507 & 0.0533 & 0.0561 & 0.0573 & 0.0624 & 0.0617 & 0.0670 & 0.0674 & \underline{0.0680} & \textbf{0.0716} \\
& R@20  & 0.0575 & 0.0754 & 0.0667 & 0.0649 & 0.0832 & 0.0782 & 0.0788 & 0.0867 & 0.0904 & 0.0985 & 0.0978 & 0.1035 & \underline{0.1046} & 0.1035 & \textbf{0.1089} \\
& N@10  & 0.0192 & 0.0257 & 0.0223 & 0.0211 & 0.0285 & 0.0264 & 0.0291 & 0.0305 & 0.0311 & 0.0324 & 0.0321 & 0.0361 & 0.0363 & \underline{0.0365} & \textbf{0.0387} \\
& N@20  & 0.0249 & 0.0328 & 0.0287 & 0.0275 & 0.0364 & 0.0335 & 0.0365 & 0.0383 & 0.0395 & 0.0416 & 0.0408 & 0.0455 & \underline{0.0458} & 0.0457 & \textbf{0.0484} \\
\midrule
\multirow{4}{*}{Sports} 
& R@10  & 0.0432 & 0.0569 & 0.0561 & 0.0394 & 0.0607 & 0.0574 & 0.0667 & 0.0628 & 0.0659 & 0.0705 & 0.0687 & 0.0747 & 0.0758 & \underline{0.0762} & \textbf{0.0809} \\
& R@20  & 0.0653 & 0.0864 & 0.0857 & 0.0625 & 0.0928 & 0.0881 & 0.0998 & 0.0961 & 0.0987 & 0.1077 & 0.1035 & 0.1133 & \underline{0.1160} & 0.1142 & \textbf{0.1187} \\
& N@10  & 0.0241 & 0.0311 & 0.0305 & 0.0203 & 0.0335 & 0.0316 & 0.0366 & 0.0339 & 0.0357 & 0.0382 & 0.0357 & 0.0405 & \underline{0.0414} & 0.0408 & \textbf{0.0441} \\
& N@20  & 0.0298 & 0.0387 & 0.0386 & 0.0266 & 0.0421 & 0.0393 & 0.0454 & 0.0431 & 0.0443 & 0.0478 & 0.0458 & 0.0505 & \underline{0.0517} & 0.0506 & \textbf{0.0538} \\
\midrule
\multirow{4}{*}{Clothing} 
& R@10  & 0.0206 & 0.0361 & 0.0283 & 0.0221 & 0.0425 & 0.0447 & 0.0440 & 0.0503 & 0.0577 & 0.0616 & 0.0593 & 0.0643 & 0.0651 & \underline{0.0659} & \textbf{0.0698} \\
& R@20  & 0.0303 & 0.0544 & 0.0418 & 0.0357 & 0.0661 & 0.0663 & 0.0655 & 0.0755 & 0.0845 & 0.0917 & 0.0874 & 0.0961 & \underline{0.0993} & 0.0987 & \textbf{0.1017} \\
& N@10  & 0.0114 & 0.0197 & 0.0162 & 0.0116 & 0.0227 & 0.0237 & 0.0238 & 0.0277 & 0.0316 & 0.0333 & 0.0325 & 0.0348 & 0.0356 & \underline{0.0360} & \textbf{0.0378} \\
& N@20  & 0.0138 & 0.0243 & 0.0196 & 0.0151 & 0.0283 & 0.0289 & 0.0293 & 0.0356 & 0.0387 & 0.0409 & 0.0396 & 0.0428 & 0.0437 & \underline{0.0443} & \textbf{0.0457} \\
\bottomrule
\end{tabular}
}
\end{table*}

\subsection{Experimental Settings}

\subsubsection{Datasets}

We conduct experiments on three categories of the widely-used Amazon \footnote{\url{http://jmcauley.ucsd.edu/data/amazon/}} Review dataset~\cite{lakkaraju2013s}: \textit{Baby}, \textit{Sports and Outdoors}, and \textit{Clothing, Shoes and Jewelry}, abbreviated as Baby, Sports, and Clothing. Following common practice in FREEDOM~\cite{han2022modality} and SMORE\cite{ong2024spectrum}, we apply the 5-core filtering and utilize the same multimodal inputs. Dataset statistics are summarized in Table~\ref{tab:datasets Statistics}.

\begin{table}
  \caption{ Statistics of the experimental Amazon datasets.}
  \label{tab:datasets Statistics}
  \begin{tabular}{ccccccc}
    \toprule
    Dataset & User & Item & Interactions & Density \\
    \midrule
    Baby & 19,445 & 7,050 & 139,110 & 0.101\%\\
    Sports & 35,598 & 18,357 & 256,308 & 0.039\%\\
    Clothing & 39,387 & 23,033 & 237,488 & 0.026\%\\
  \bottomrule
\end{tabular}
\end{table}

\subsubsection{Baselines}
To evaluate the performance, we compared with the following baselines. Traditional collaborative filtering methods: \textbf{BPR} \cite{rendle2012bpr} and \textbf{LightGCN} \cite{he2020lightgcn}. Recent multimodal recommendation methods:  \textbf{VBPR} \cite{he2016vbpr}, \textbf{MMGCN} \cite{wei2019mmgcn}, \textbf{GRCN} \cite{wei2020graph}, \textbf{DualGNN} \cite{wang2021dualgnn}, \textbf{SLMRec} \cite{tao2022self}, \textbf{LATTICE} \cite{zhang2021mining}, \textbf{BM3} \cite{zhou2023bootstrap}, \textbf{FREEDOM} \cite{zhou2023tale}, \textbf{DiffMM} \cite{jiang2024diffmm}, \textbf{MMIL} \cite{yang2024multimodal}, \textbf{AlignRec} \cite{liu2024alignrec}, \textbf{SMORE} \cite{ong2024spectrum}.

\subsubsection{Evaluation Protocols}
We adopt the all-ranking protocol for top-K recommendation, where all candidate items are ranked for each user. Following standard practice~\cite{mu2022learning}, we report Recall@K and NDCG@K for K in \{10, 20\}, which measure retrieval accuracy and ranking quality, respectively. User interaction data is split into 80\% training, 10\% validation, and 10\% testing. During training, negative sampling is applied by pairing each observed interaction with randomly sampled negative items. 

\subsubsection{Implementation Details}
Our method is implemented in PyTorch using the MMRec framework~\cite{zhou2023mmrec}. We set the embedding dimension to 64 and initialize parameters with Xavier initialization~\cite{glorot2010understanding}. The model is optimized using Adam~\cite{kingma2014adam}, with the learning rate tuned in $\{0.0001, 0.0005, 0.001, 0.005\}$ and a batch size of 2048. Training runs with early stopping based on Recall@20, triggered after 20 validation steps without improvement. We tune the information bottleneck weight $\lambda$ and contrastive loss weight $\eta$ in $\{0.0001, 0.001, 0.01, 0.1, 1.0\}$, and vary the number of frequency bands $M$ in $\{2, 3, 4, 5\}$ to evaluate decomposition effectiveness. All experiments are conducted on a NVIDIA A100 GPU.

\begin{table*}
\small
\centering
  \caption{
Performance comparison under the cold-start scenario, where users with $\leq$ 5 interactions are used for experiment. FITMM achieves notable improvements over strong baselines, confirming its robustness and adaptability in data-sparse settings.
}

  \label{tab:cold_performance}
  \begin{tabular}{cccccccccccccccc}
    \toprule
    \multirow{2}{*}{Model} & \multicolumn{4}{c}{\textbf{Amazon-Baby}} & \multicolumn{4}{c}{\textbf{Amazon-Sports}} & \multicolumn{4}{c}{\textbf{Amazon-Clothing}}\\ 
    & R@10 & R@20 & N@10 & N@20 & R@10 & R@20 & N@10 & N@20 & R@10 & R@20 & N@10 & N@20\\
    \midrule
GRCN & 0.0510 & 0.0770 & 0.0268 & 0.0333 & 0.0556 & 0.0814 & 0.0305 & 0.0370 & 0.0416 & 0.0648 & 0.0213 & 0.0271 \\
BM3  & 0.0549 & 0.0845 & 0.0288 & 0.0363 & 0.0581 & 0.0920 & 0.0305 & 0.0390 & 0.0424 & 0.0614 & 0.0229 & 0.0276 \\
LATTICE & 0.0570 & 0.0827 & 0.0298 & 0.0362 & 0.0380 & 0.0570 & 0.0205 & 0.0252 & 0.0414 & 0.0555 & 0.0218 & 0.0254 \\
FREEDOM & 0.0535 & 0.0880 & 0.0297 & 0.0384 & 0.0622 & 0.0933 & 0.0328 & 0.0406 & 0.0444 & 0.0671 & 0.0241 & 0.0298 \\
MMIL & 0.0672 & 0.1014 & 0.0377 & 0.0463 & 0.0759 & 0.1141 & \underline{0.0422} & \underline{0.0517} & 0.0635 & 0.0948 & 0.0341 & 0.0423 \\
SMORE & \underline{0.0687} & \underline{0.1019} & \underline{0.0381} & \underline{0.0464} & \underline{0.0788} & \underline{0.1158} & 0.0420 & 0.0512 & \underline{0.0676} & \underline{0.0983} & \underline{0.0369} & \underline{0.0446} \\
\hline
FITMM & \textbf{0.0752} & \textbf{0.1109} & \textbf{0.0413} & \textbf{0.0503} & \textbf{0.0820} & \textbf{0.1218} & \textbf{0.0435} & \textbf{0.0535} & \textbf{0.0691} & \textbf{0.1003} & \textbf{0.0381} & \textbf{0.0459} \\
    \bottomrule
  \end{tabular}
\end{table*}

\subsection{Overall Performance (RQ1)}
Table~\ref{tab:main_results} compares performance across three Amazon datasets. Our method consistently outperforms all state-of-the-art (SOTA) baselines on both Recall and NDCG. Specifically, it improves R@10 from 0.0680 to 0.0716 on Baby, from 0.0762 to 0.0809 on Sports, and from 0.0659 to 0.0698 on Clothing. These consistent gains across domains demonstrate the strength of our frequency-aware design. By decomposing multimodal representations into frequency bands and applying task-adaptive fusion guided by information bottleneck principles, our model effectively highlights informative components while suppressing noise, particularly in high-frequency ranges. Compared with MMIL and AlignRec, which enhance modeling through multi-intention learning or hierarchical alignment, our approach introduces a structured frequency-domain perspective. While SMORE also employs frequency-based denoising, it relies on global FFT filtering and a unified complex-valued filter, which limits interpretability and flexibility. In contrast, our method adopts component-wise decomposition and fine-grained, modality-aware fusion, further strengthened by contrastive and bottleneck regularization. These results confirm that our approach not only boosts accuracy but also provides a principled and interpretable framework for multimodal recommendation.

\subsection{Cold-Start Recommendation (RQ2)}
To evaluate performance under data sparsity, we conduct experiments on cold-start users who have no more than five interactions. As shown in Table~\ref{tab:cold_performance}, our method consistently outperforms all representative baselines across the three datasets. Interestingly, it even achieves better results on cold-start users than on the full user set, indicating strong generalization capability. This result reflects the effectiveness of our frequency decomposition and information bottleneck design, which enable the model to extract stable and task-relevant signals from multimodal content. The spectral energy analysis in Fig.~\ref{feq_band_w} further reveals that cold-start items tend to concentrate information in low-frequency bands. This observation aligns with the model’s structural preference for capturing general semantics. In contrast, methods such as FREEDOM and LATTICE, which rely heavily on collaborative graph structures, show significant performance drops under cold-start conditions. Although SMORE and MMIL remain competitive by using frequency filtering and modality-aware intent modeling, our method provides a more comprehensive solution by jointly leveraging frequency semantics, modality structure, and task relevance. Through selective fusion across frequency components, the model offers both robustness and interpretability in cold-start recommendation settings.

\subsection{Ablation Study (RQ3)}

We conduct ablation studies to assess the contribution of each key module in our framework. Table~\ref{tab:exp_ablation} summarizes the performance of several variants: (1) \textit{r/p AF}, which replaces the adaptive frequency fusion with naive averaging; (2) \textit{w/o AF}, which removes frequency decomposition and fusion entirely; (3) \textit{w/o IB}, which discards the information bottleneck loss; (4) \textit{w/o CL}, which removes contrastive learning; and (5) \textit{w/o MM}, which eliminates all multimodal inputs. Among them, \textit{w/o AF} shows the most significant performance drop, demonstrating that frequency modeling is central to capturing multi-scale user preferences. Even the simpler variant \textit{r/p AF}, which retains frequency components but lacks task awareness, results in substantial degradation (e.g., Baby R@10 drops from 0.0716 to 0.0646). These observations highlight the necessity of adaptive, task-aware fusion to prioritize semantically relevant frequency bands and suppress noise. Furthermore, removing the information bottleneck (\textit{w/o IB}) leads to consistent drops across all datasets, verifying its role in filtering redundant components and enforcing informative compression. While the impact of removing contrastive learning (\textit{w/o CL}) is less severe, this module remains important for enhancing modality alignment, which supports better personalization. Finally, removing multimodal inputs (\textit{w/o MM}) causes the steepest performance decline, emphasizing that visual and textual modalities are not only semantically rich but also provide essential frequency-diverse cues for fine-grained preference modeling. 

\begin{table}
\centering
\small
  \label{tab:result}
  \begin{tabular}{ccccccc}
      \toprule
      \textbf{Model} & \multicolumn{2}{c}{\textbf{Baby}} & \multicolumn{2}{c}{\textbf{Sports}} & \multicolumn{2}{c}{\textbf{Clothing}}\\
      & R10 & N10 & R10 & N10 & R10 & N10 \\
    \hline
    r/p AF & 0.0646 & 0.0341 & 0.0767 & 0.0421 & 0.0659 & 0.0356 \\

    w/o AF & 0.0554 & 0.0305 & 0.0605 & 0.0329 & 0.0402 & 0.0219 \\

    w/o IB & 0.0702 & 0.0378 & 0.0785 & 0.0424 & 0.0681 & 0.0364 \\

    w/o CL & 0.0708 & 0.0383 & 0.0791 & 0.0431 & 0.0688 & 0.0369 \\

    w/o MM & 0.0517 & 0.0280 & 0.0539 & 0.0296 & 0.0373 & 0.0200 \\

    \hline
    FITMM & \textbf{0.0716} & \textbf{0.0387} & \textbf{0.0809} & \textbf{0.0441} & \textbf{0.0698} & \textbf{0.0378} \\

    \bottomrule
  \end{tabular}
  \caption{
Ablation study of key components in FITMM. Each variant modifies one module to evaluate its contribution. }

  \label{tab:exp_ablation}
\end{table}

\begin{figure*}[t]
  \centering
  \includegraphics[width=\linewidth]{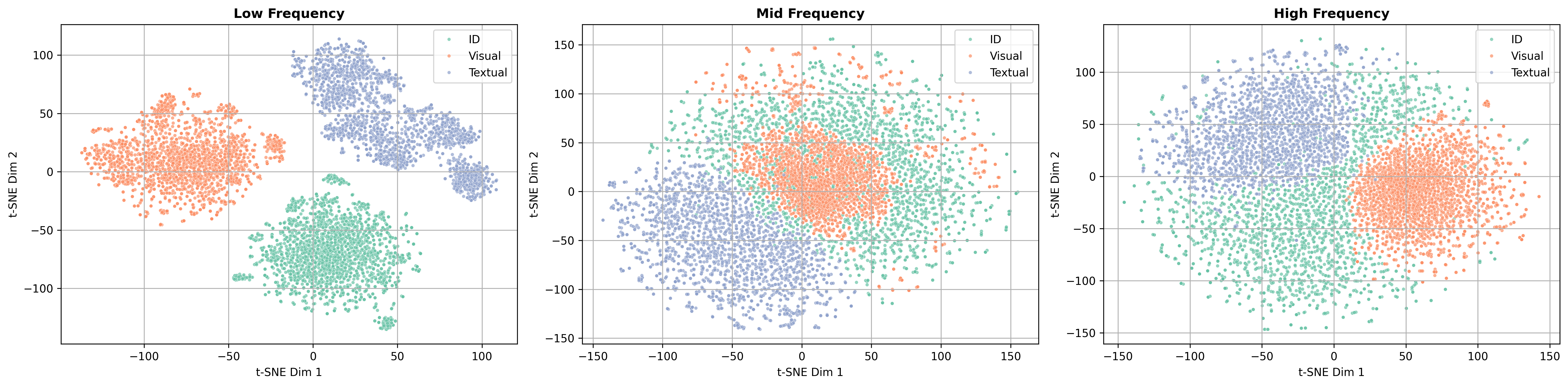}
  \caption{
Spectral visualization of ID, visual, and textual representations across three frequency bands using t-SNE. Low-frequency bands exhibit modality-specific clustering; mid-frequency bands show increasing cross-modal overlap; high-frequency bands reflect task-specific, modality-agnostic semantic encoding, validating our frequency-aware modeling design.
}
  \Description{Modal Frequency Distribution.}
  \label{freq_vis}
\end{figure*}

\begin{figure}[h]
  \centering
  \includegraphics[width=\linewidth]{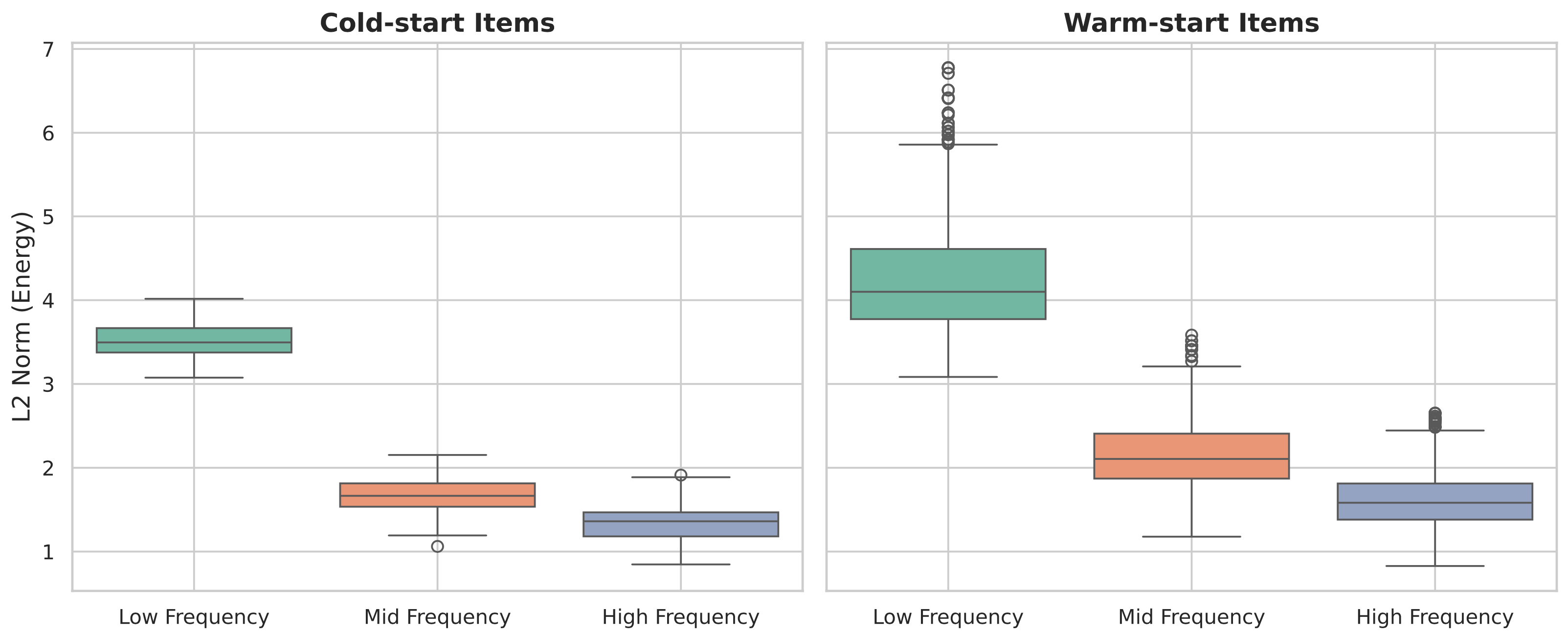}
  \caption{
Comparison of frequency-wise energy distribution between cold-start and warm-start items. 
}
  \Description{Frequency Weight Distribution.}
  \label{feq_band_w}
\end{figure}

\subsection{Spectral Analysis of Cross-Modality Representations (RQ4)}
To examine how the model captures semantics and cross-modal interactions from a frequency-domain perspective, we analyze ID, visual, and textual representations on the Sports dataset, as illustrated in Fig.~\ref{freq_vis}. In the low-frequency band, the three modalities form distinct and well-separated clusters. This suggests that low-frequency components primarily encode modality-specific structures, such as long-term user preferences in the ID modality, visual style in the visual modality, and category-level semantics in the textual modality. The clear separation demonstrates the effectiveness of frequency decomposition in isolating dominant features and supports the use of modality-aware fusion strategies that can selectively prioritize or reduce signals to improve robustness. As frequency increases, modality boundaries gradually blur. In the mid-frequency band, different modalities begin to overlap, reflecting cross-modal interactions and shared semantics, such as common appearance cues between images and text. This pattern supports our model's goal of learning collaborative representations. In the high-frequency band, representations become more entangled and less modality-specific. These components encode sparse but informative details, such as fine-grained item attributes or short-term behavioral patterns, which are critical for personalization. This behavior aligns with our information bottleneck design, which aims to preserve predictive information while reducing irrelevant noise.

\subsection{Frequency Characteristics under Cold-Start and Warm-Start Conditions (RQ5)}

To investigate how different frequency bands contribute under varying data sparsity, we visualize the $L_2$ energy distribution of cold-start and warm-start item representations across frequency components (Fig.~\ref{feq_band_w}). For cold-start items, energy is highly concentrated in the low-frequency band, suggesting the model relies more on stable modality-specific signals (e.g., visual or textual descriptions) to construct coarse-grained preferences in the absence of user history. This supports our design, where low-frequency components retain general and robust semantic cues. In contrast, warm-start items exhibit stronger responses in mid- and high-frequency bands. The mid-frequency band reflects cross-modal semantic alignment, while the high-frequency band captures personalized, fine-grained details derived from rich interaction signals. Notably, the high-frequency energy for cold-start items remains suppressed, indicating our frequency gating and information bottleneck mechanisms effectively filter uninformative components under sparse conditions. These observations confirm that our model adaptively allocates representational focus across frequency bands, enabling robust generalization for cold-start cases and personalize for warm-start items.

\subsection{Hyperparameter Sensitivity (RQ6)}

We investigate the sensitivity of our model to four key hyperparameters, with results shown in Fig.~\ref{s1}. For the information bottleneck weight $\lambda$, performance consistently improves as $\lambda$ increases, peaking at $0.1$ or $1.0$. This indicates that appropriate regularization helps suppress redundant frequency signals and enhances generalization, while too small a $\lambda$ fails to constrain uninformative components effectively. Similarly, the contrastive loss weight $\eta$ achieves the best results at $0.0001$ or $0.01$, suggesting that mild contrastive signals enhance modality alignment and high-frequency discriminability. However, excessive regularization (e.g., $\eta=1.0$) harms performance, likely due to interference with the main optimization objective.

As for the number of frequency decomposition bands $K$, the model performs best at $K=3$ to $5$, validating the importance of moderate decomposition granularity: too coarse fails to capture multi-scale semantics, while too fine leads to sparse and unstable signals. Increasing the latent dimension $d$ generally improves performance, but the optimal value varies across datasets (e.g., $d=256$ for \textit{Baby}, $d=128$ for \textit{Sports}), implying a trade-off between representation capacity and data complexity. 

\begin{figure}[h]
  \centering
  \includegraphics[width=\linewidth]{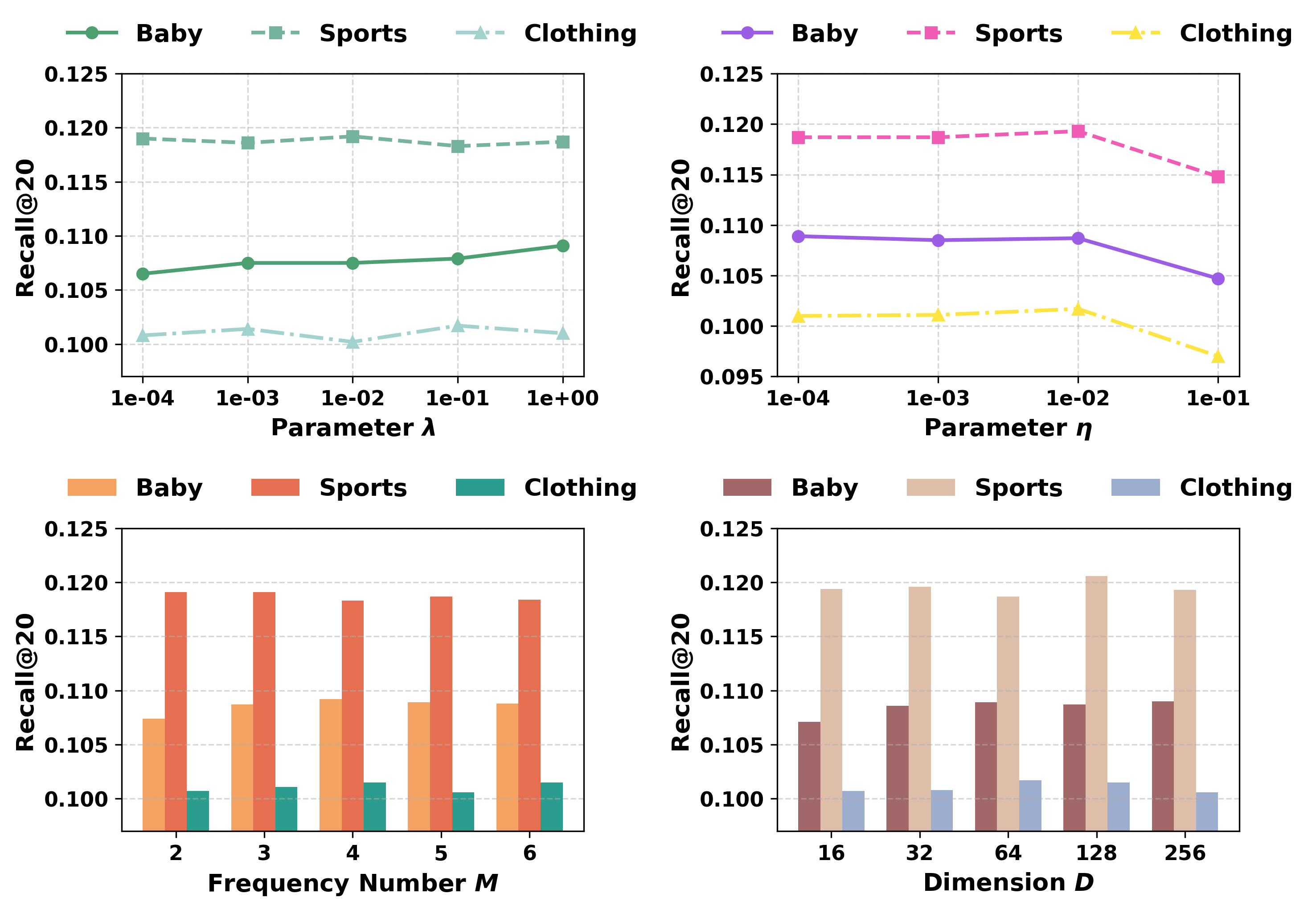}
  \caption{Sensitivity analysis of key hyperparameters. The model shows robust performance under moderate settings.}
  \Description{Different Attention Layer size.}
  \label{s1}
\end{figure}

\section{CONCLUSION}

In this paper, we propose a novel frequency-aware multimodal recommendation framework, FITMM, which effectively integrates adaptive frequency decomposition with an information-theoretic representation learning paradigm. Extensive experiments on three real-world Amazon datasets demonstrate that FITMM consistently outperforms state-of-the-art baselines, particularly in cold-start scenarios. In future work, we plan to further explore personalized frequency modeling, such as user-dependent frequency selection, and investigate the integration of pretrained multimodal foundation models to enhance semantic understanding and scalability.

\balance
\bibliographystyle{ACM-Reference-Format}
\bibliography{sample-base}


\end{document}